\documentclass[letterpaper, 10 pt, conference]{ieeeconf} 

\IEEEoverridecommandlockouts                              

\usepackage{cite}
\usepackage{amsmath,amssymb,amsfonts}
\usepackage{algorithm}
\usepackage{algpseudocode}
\usepackage{graphicx}
\usepackage{textcomp}
\usepackage[colorlinks=true,linkcolor=blue]{hyperref}
\usepackage{cuted}
\usepackage{stfloats}
 \usepackage{flushend}

\newtheorem{theorem}{Theorem}[section]

\newtheorem{proposition}[theorem]{Proposition}
\newtheorem{assumption}[theorem]{Assumption}

    \title{\LARGE \bf
Remote State Estimation with Unreliable Communication: \\ Information Asymmetry and Belief Structure
}

\author{Ioannis Tzortzis and Themistoklis Charalambous
\thanks{The authors are with the Department
of Electrical and Computer Engineering, University of Cyprus, Nicosia, Cyprus. E-mails: (\{tzortzis.ioannis,\ charalambous.themistoklis\}@ucy.ac.cy).  T. Charalambous is also with the Department of Electrical Engineering and Automation, School of Electrical Engineering, Aalto University, Espoo, 02150, Finland. }
 \thanks{This work has been partly funded by MINERVA, a European Research Council (ERC) project funded under the European Union's Horizon 2022 research and innovation programme (Grant agreement No. 101044629).}
}

\begin{document}

\maketitle

\begin{abstract}
In this paper, we study remote state estimation under unreliable forward and feedback communication between a smart sensor and a remote estimator. When the sensor cannot perfectly reconstruct the filtering state maintained by the remote estimator, an information asymmetry arises between the two sides of the network. To analyze this asymmetry, we first characterize the internal state maintained by the remote estimator and its evolution under the packet reception process. The sensor’s uncertainty about this state is then captured through a belief conditioned on its information set, and we derive a recursive update  under noisy acknowledgment feedback. Finally, we show that the resulting belief admits a finite Gaussian mixture representation whose number of components grows at most linearly over time, ensuring computational tractability. Simulation results illustrate the impact of information asymmetry on the estimation performance and demonstrate the effectiveness of the proposed framework.
\end{abstract}

\section{Introduction}
\label{sec:introduction}
Remote estimation systems operating over wireless networks are increasingly used to support monitoring and control of dynamical systems in applications such as industrial automation, smart infrastructure, and cyber–physical systems. In these settings, smart sensors equipped with local processing capabilities collect measurements  of a physical process, perform local state estimation, and transmit information to a remote estimator  through a wireless communication channel. Because wireless channels are inherently unreliable, transmitted packets may be lost due to fading, interference, or limited communication resources.   To improve reliability, many remote estimation schemes employ acknowledgment signals from the remote estimator to inform the sensor whether a transmission has been successfully received. In practice, this feedback takes the form of a one-bit acknowledgment signal (ACK/NACK). Because of its small size, this signal is often assumed to be reliably delivered, allowing the sensor to accurately track which packets have reached the remote estimator and to reconstruct its internal filtering state. However, when the feedback channel is noisy, the acknowledgment bit may be flipped during transmission. As a result, the sensor may misinterpret packet reception outcomes and can no longer accurately track the internal filtering state maintained by the remote estimator.

In this case, the sensor does not know the true packet reception outcomes and therefore cannot reconstruct the internal filtering state maintained by the remote estimator. As a result, the remote estimator and the sensor possess different information about the estimation process. In particular, the estimator maintains an internal state that is only partially known to the sensor. This mismatch between the information available at the two sides of the network creates \textit{information asymmetry}, which fundamentally alters the estimation dynamics and requires the sensor to reason probabilistically about the estimator state. 

\begin{figure}[t]
    \centering
    \includegraphics[width=\linewidth]{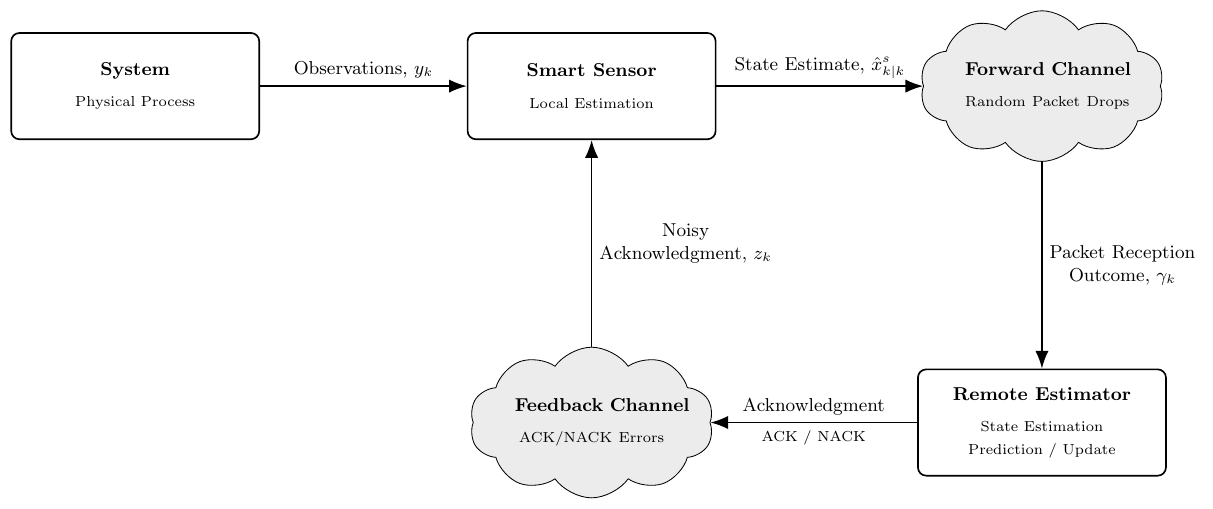}
    \caption{Remote estimation scheme with unreliable forward and feedback wireless channels.}
    \label{fig:placeholder}
\end{figure}

Remote state estimation over unreliable communication channels has been extensively studied in the networked control systems literature. Early work analyzed Kalman filtering with intermittent observations caused by packet losses and established conditions under which the estimation error remains bounded \cite{sinopoli2004kalman}. Subsequent studies investigated estimation architectures in which smart sensors perform local filtering and transmit state estimates to remote estimators over unreliable networks \cite{gupta2009data,ren2017infinite,chakravorty2019remote,Jiping-Pappas:PIMRC2025,tzortzis2025remote,makridis2026remote}. These works examined the impact of packet drops on estimation performance and the design of communication-aware estimation strategies.
Several contributions have also studied remote estimation under communication constraints such as event-triggered transmissions, scheduling policies, and limited bandwidth \cite{quevedo2011,quevedo2012state,leong2013,trimpe2014event,chen2017event,soleymani24,yang2026finite}. 
In many of these works, it is assumed that the sensor has perfect knowledge of whether its transmitted packet has been successfully received, that is, the acknowledgment feedback is assumed to be reliable. Under this assumption, the sensor can reconstruct the internal filtering state maintained by the remote estimator and remain synchronized with it.
Only a few works relax this assumption and consider imperfect acknowledgment feedback. For example, \cite{nourian2014optimal} studies optimal transmission strategies for Kalman filtering over packet-dropping links when acknowledgment signals may be erroneous. More recently, \cite{soleymani2025remote} develops a dynamic programming framework for remote estimation with unreliable forward and feedback channels and characterizes optimal coding, scheduling, and estimation strategies under communication constraints. However, existing works primarily focus on the design of transmission and estimation policies, while the structure and evolution of the sensor’s uncertainty about the estimator’s internal state under unreliable feedback are not explicitly characterized.

Motivated by the above discussion, this paper studies the estimation dynamics induced by unreliable feedback and the resulting information asymmetry between the sensor and the remote estimator. The main contributions of this paper are summarized as follows:
\begin{itemize}
    \item We characterize the internal state maintained by the remote estimator and its evolution under the packet reception process.
    \item We  model the sensor’s uncertainty about this state through a belief conditioned on the sensor’s information set and derive a recursive update  under noisy acknowledgment feedback.
    \item   We show that the resulting belief admits a finite Gaussian mixture representation whose number of components grows at most linearly over time.
    \item We derive a computationally tractable recursive update of the mixture parameters and associated weights.
\end{itemize}

The rest of the paper is organized as follows. Section \ref{sec:system_setup} presents the system setup. Section \ref{sec.problem.formulation} characterizes the estimation dynamics at the remote estimator and formulates the sensor belief over the estimator state under unreliable feedback. Section \ref{sec.belief.evolution} derives the recursive belief update and characterizes its finite-mixture representation. Section \ref{num.example} illustrates the theoretical results through a numerical example. Finally, Section \ref{sec.conclusion} concludes the paper and discusses future research directions.

\section{System Setup}
\label{sec:system_setup}

\subsection{System Model}
\label{subsec:system_model}
Consider a discrete-time  linear Gaussian system
\begin{subequations}
\begin{align}
    x_{k+1} &= A_k x_k + w_k,\\
    y_k &= C_k x_k + v_k,
\end{align}
\end{subequations}
where $x_k \in \mathbb{R}^n$ denotes the system state  and $y_k \in \mathbb{R}^m$ denotes the measured output. The matrices $A_k\in \mathbb{R}^{n\times n}$ and $C_k\in \mathbb{R}^{m\times n}$ are the state transition and observation matrices, respectively.
The process noise $w_k$ and measurement noise $v_k$ are zero-mean Gaussian random variables with covariances $Q_w$ and $Q_v$, respectively. The initial state $x_0$ is Gaussian with mean $\mu_{x_0}$ and covariance $Q_{x_0}$.
 Throughout the paper, $\mathcal{N}(\mu,\Sigma)$ denotes a Gaussian distribution with mean $\mu$ and covariance $\Sigma$.

\subsection{Smart Sensor}
\label{subsec:sensor}
The smart sensor runs a local Kalman filter based on  measurements $y_k$. 
The local state estimate and the associated error covariance evolve according to
\begin{subequations}
\begin{align}
    \hat{x}^s_{k|k-1}&=A_{k-1}\hat{x}^s_{k-1|k-1},\\
    P^s_{k|k-1}&= A_{k-1}P^s_{k-1|k-1}A_{k-1}^\top+Q_w,\\
    K_k&=P^s_{k|k-1}C_k^\top(C_kP^s_{k|k-1}C_k^\top+Q_v)^{-1},\\
    \hat{x}^s_{k|k}&=\hat{x}^s_{k|k-1}+K_k(y_k-C_k\hat{x}^s_{k|k-1}),\\
    P^s_{k|k}&=(I-K_kC_k)P^s_{k|k-1}.
\end{align}\end{subequations}
The above equations define the local filtering process at the smart sensor, whose estimate $\hat x^s_{k|k}$ is transmitted to the remote estimator over the forward wireless channel.

\subsection{Forward Wireless Channel}
\label{subsec:forward_channel}
The forward wireless channel is modeled as an unreliable communication channel characterized  by packet drops. Let $\gamma_k$ denote the packet reception outcome at time $k$, where
\begin{equation*}
    \gamma_k =
\begin{cases}
1, & \text{if the packet transmitted at time $k$ received},\\[3pt]
0, & \text{otherwise}.
\end{cases}
\end{equation*}
The sequence $\{\gamma_k\}_{k\geq 0}$ is assumed to be independent and identically distributed with
\begin{equation}\label{FWC.eq}
\operatorname{Pr(\gamma_k=1)} = 1-p,\qquad
\operatorname{Pr(\gamma_k=0)} = p, 
\end{equation}
 where $p\in [0,1]$ denotes the packet drop probability. 

At each time step, the smart sensor transmits its current estimate $\hat{x}^s_{k|k}$ to the remote estimator over the forward channel. If $\gamma_k=1$, the transmitted packet is successfully received at the remote estimator. If $\gamma_k=0$, the packet is lost and no information is delivered to the remote estimator. While in practice failed transmissions may still yield partial or unreliable information, we adopt an erasure model in which such information is not used for state reconstruction.

\subsection{Remote Estimator}
\label{subsec:remote_estimator}
The remote estimator maintains an estimate of the system state based on the information received from the forward channel. 
When a packet transmitted at time $k$ is successfully received $(\gamma_k=1)$, the remote estimator recovers a perturbed version of the transmitted estimate due to decoding and representation imperfections (e.g., quantization or residual reconstruction errors). We model this effect as an  additive decoding noise $n_k$ affecting the received estimate\footnote{This additive-noise model can be justified, for example, via dithered quantization, which yields a signal-independent quantization error that can be modeled as additive noise \cite{zamir1996information}.}. 

Since the Kalman covariance recursion depends only on the system parameters and noise statistics, $P^s_{k|k}$ can be computed locally at both the smart sensor and the remote estimator. Hence, it need not be transmitted and is assumed to be known to the remote estimator. Accordingly, the information available at the remote estimator associated with the transmission at time 
$k$ is defined as
\begin{equation*}
    {\cal S}_k =
\begin{cases}
(\hat{x}^s_{k|k}+n_{k}, P^s_{k|k}), & \text{if $\gamma_k=1$},\\[3pt]
\emptyset, & \text{if $\gamma_k=0$},
\end{cases}
\end{equation*}
where $n_k\sim \mathcal{N}(0,Q_n)$.

\begin{assumption}\label{main.ass}
The initial state $x_0$, the process noise $\{w_k\}_{k\geq 0}$, the measurement noise $\{v_k\}_{k\geq 0}$, the decoding noise $\{n_k\}_{k\geq 0}$, and the packet reception process $\{\gamma_k\}_{k\geq 0}$ are mutually independent. Moreover, each  sequence  is independent across time.
\end{assumption}

Let
\begin{equation}\label{info.RE.set}
    I_k^e=\{\gamma_{0:k-1}, {\cal S}_{0:k-1}\},
\end{equation}
denote the information available to the remote estimator at the beginning of time step $k$.
The remote estimator is initialized with $\hat{x}^e_{0|-1}=\mu_{x_0}$ and $P^e_{0|-1}=Q_{x_0}$.
Before observing the packet reception outcome $\gamma_k$, the remote estimator maintains the predicted estimate $(\hat{x}^e_{k|k-1},P^e_{k|k-1})$. After observing $\gamma_k$, it updates these quantities to obtain the filtered estimate $(\hat{x}^e_{k|k},P^e_{k|k})$. Specifically,
\begin{enumerate}
    \item Prediction step:
    Before observing $\gamma_k$, the estimator computes 
    \begin{subequations}
    \begin{align}
        \hat{x}^e_{k|k-1}&=A_{k-1}\hat{x}^e_{k-1|k-1},\\
        P_{k|k-1}^e &= A_{k-1}P^e_{k-1|k-1}A_{k-1}^\top+Q_w.
    \end{align}
        \end{subequations}
    \item Correction step (packet received, $\gamma_k=1$): If the packet is successfully received, the estimator resets its predicted estimate with the noisy estimate received from the sensor,
    \begin{equation}
        \hat{x}^{e}_{k|k} = \hat{x}^s_{k|k} + n_k, \quad
        P^e_{k|k} = P^s_{k|k} + Q_n.
    \end{equation}
    \item Correction step (packet lost, $\gamma_k=0$): If the packet is lost, the estimator retains the predicted values,
    \begin{equation}
        \hat{x}^{e}_{k|k} = \hat{x}^{e}_{k|k-1}, \qquad
        P^{e}_{k|k} = P^{e}_{k|k-1}.
    \end{equation}
\end{enumerate}

The packet reception $\gamma_k$ is then communicated to the smart sensor through the feedback channel described next.

\subsection{Feedback Wireless Channel}
\label{subsec:feeback_channel}
The packet reception outcome $\gamma_k$ is communicated to the smart sensor through a one-bit acknowledgment signal.
The feedback channel is modeled as a binary symmetric channel with crossover probability $q \in [0,1]$. Hence, the smart sensor observes $z_k \in \{0,1\}$, where
\begin{equation*}
    \operatorname{Pr}(z_k=z\mid \gamma_k=\gamma) =
\begin{cases}
1-q, & \text{if }z=\gamma,\\[3pt]
q, & \text{if }z\neq\gamma.
\end{cases}
\end{equation*}
Thus, $z_k$ provides noisy information about the packet reception outcome $\gamma_k$.

\section{Problem Formulation}\label{sec.problem.formulation}
Due to unreliable acknowledgment feedback, the  sensor does not  observe the true packet reception outcomes and therefore cannot  reconstruct the internal  state maintained by the remote estimator. We address this information asymmetry through the following tasks:
\begin{enumerate}
    \item[a)] describe the evolution of the remote estimator's internal state under packet drops and decoding noise, 
    \item[b)] derive a recursive update of the sensor’s belief over the remote estimator state based on the information available to the sensor, and 
    \item[c)] obtain a representation of this belief that remains computationally tractable over time. 
\end{enumerate}
We first characterize the evolution of the estimator state and then describe the sensor's uncertainty about it.

\subsection{Remote Estimator Information Structure}\label{subsec.RE.info}
Define the remote estimator state at the beginning of time $k$ as
\begin{equation}\label{internal.state}
    \xi^e_k = (\hat{x}^{e}_{k|k-1}, P^{e}_{k|k-1}),
\end{equation}
with state space $\Xi = \mathbb{R}^{n} \times S_+^{n}$, where $S_+^{n}$ denotes the set of symmetric positive semi-definite matrices.
Based on the estimator update rules described in Section \ref{subsec:remote_estimator}, the evolution of the  estimator state depends on the packet reception outcome $\gamma_k$. Let $F_1(\cdot)$ and $F_0(\cdot)$ denote the  update mappings corresponding to  successful packet reception and packet loss, respectively. Then,  
\begin{equation}\label{info.state.xi.recursion}
    \xi^e_{k+1} = \begin{cases}
F_0(\xi_k^e), & \text{$\gamma_k=0$},\\[3pt]
F_1(\hat{x}^s_{k|k},n_k), & \text{$\gamma_k=1$},
\end{cases}
\end{equation}
where 
\begin{equation*}
    F_0(\xi_k^e) = \Big(A_k\hat{x}^e_{k|k-1}, A_kP^e_{k|k-1}A_k^\top+Q_w\Big),
\end{equation*}
and 
\begin{equation*}
    F_1(\hat{x}^s_{k|k},n_k) = \Big(A_k(\hat{x}^s_{k|k}+n_k), A_k(P^s_{k|k}+Q_n)A_k^\top+Q_w\Big).
\end{equation*}
Under packet loss, the estimator state evolves deterministically from $\xi_k^e$, whereas under successful reception its state estimate component is stochastic due to the decoding noise.

\subsection{Smart Sensor Information Structure}
\label{subsec:SS_Info}
At the beginning of time $k$, the information available to the smart sensor is defined as
\begin{equation}\label{ss.info.set}
    I_k^s := \{y_{0:k}, z_{0:k-1}\},
\end{equation}
where $y_{0:k}$ denotes the sequence of measurements available at the sensor, and $z_{0:k-1}$ the acknowledgment signals received through the feedback  channel. The one-step delay in \eqref{ss.info.set} reflects the fact that the acknowledgment corresponding to $\gamma_k$ becomes available to the sensor at time $k+1$.

Since the smart sensor does not observe the packet reception outcomes $\gamma_k$, it cannot determine the remote estimator state $\xi_k^e$ exactly. We therefore define the sensor belief as
\begin{equation}\label{belief.dfn}
    \pi_k^s(E) := \operatorname{Pr}(\xi_k^e \in E \mid I_k^s),\quad E\subseteq \Xi,
\end{equation}
which represents the conditional distribution of the remote estimator state given the sensor's available information.

\section{Belief Evolution under Unreliable Feedback}
\label{sec.belief.evolution}
At time $k+1$, the smart sensor receives the acknowledgment $z_k$, which provides noisy information about the packet reception outcome $\gamma_k$. The  belief update is given below.
\begin{proposition}\label{proposition.belief}
For every measurable set $E\subseteq \Xi$, the sensor belief evolves according to
\begin{align}\label{eq:belief_recursion}
    \pi_{k+1}^s(E) &= \eta_k(0)\int_{\Xi}\operatorname{Pr}(\xi^e_{k+1}\in E\mid \xi_k^e=\xi,\gamma_k=0)\pi_k^s(d\xi)\nonumber\\
    &\quad +\eta_k(1)\operatorname{Pr}(\xi^e_{k+1}\in E\mid \hat{x}^s_{k|k},\gamma_k=1),
\end{align}
where
\begin{align}\label{eta.eq}
    \eta_k(\gamma)&:=\operatorname{Pr}(\gamma_k=\gamma\mid I_k^s,z_k)\nonumber\\
    &=\frac{\operatorname{Pr}(z_k\mid \gamma_k=\gamma)\operatorname{Pr}(\gamma_k=\gamma)}{\sum_{\bar{\gamma}\in \{0,1\}}\operatorname{Pr}(z_k\mid \gamma_k=\bar{\gamma})\operatorname{Pr}(\gamma_k=\bar{\gamma})}.
\end{align}
\end{proposition}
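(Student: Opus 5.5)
We prove the recursion by conditioning on the unknown packet outcome $\gamma_k$ and then using the conditional independence implied by Assumption \ref{main.ass}. The sensor information at time $k+1$ is $I_{k+1}^s=\{I_k^s, y_{k+1}, z_k\}$, so the first step is to check that $y_{k+1}$ carries no extra information about $\xi^e_{k+1}$ beyond $(I_k^s,z_k)$. Given $x_{k+1}$ and the sensor's own data, $y_{k+1}$ depends only on $v_{k+1}$. Moreover $\xi^e_{k+1}$ is a function of $\hat x^s_{0:k|0:k}$, $\gamma_{0:k}$ and $n_{0:k}$, which are independent of $v_{k+1}$ and $w_k$. We will therefore replace $\operatorname{Pr}(\cdot\mid I_{k+1}^s)$ by $\operatorname{Pr}(\cdot\mid I_k^s,z_k)$. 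This is the modeling step where some care is needed: strictly, $y_{k+1}$ is correlated with $x_k$ and hence with $\hat x^s_{k|k}$, but $\hat x^s_{k|k}$ is already a deterministic function of $y_{0:k}\in I_k^s$. The remaining dependence is on $\gamma$'s and $n$'s, which are independent of all measurement-side variables.

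Next, we apply the law of total probability over $\gamma_k\in\{0,1\}$:
\[
\pi^s_{k+1}(E)=\sum_{\gamma\in\{0,1\}}\operatorname{Pr}(\gamma_k=\gamma\mid I_k^s,z_k)\,\operatorname{Pr}(\xi^e_{k+1}\in E\mid I_k^s,z_k,\gamma_k=\gamma).
\]
The weights are exactly $\eta_k(\gamma)$. We obtain \eqref{eta.eq} by Bayes' rule, using two facts: $\gamma_k$ is independent of $I_k^s$, since $\gamma_k$ is independent of the noises and of $\gamma_{0:k-1}$, which are the only drivers of $z_{0:k-1}$; and $z_k$ depends on the past only through $\gamma_k$, by the BSC model with feedback flips independent over time.

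For the branch $\gamma_k=0$, we use \eqref{info.state.xi.recursion}: $\xi^e_{k+1}=F_0(\xi^e_k)$. Conditioning additionally on $\xi^e_k$ gives
\[
\int_\Xi \operatorname{Pr}(\xi^e_{k+1}\in E\mid \xi^e_k=\xi,\gamma_k=0)\,\operatorname{Pr}(\xi^e_k\in d\xi\mid I_k^s,z_k,\gamma_k=0).
\]
The kernel drops the conditioning on $(I_k^s,z_k)$ because $F_0$ is deterministic. We then show $\operatorname{Pr}(\xi^e_k\in d\xi\mid I_k^s,z_k,\gamma_k=0)=\pi^s_k(d\xi)$. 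The reason is that $(\gamma_k,z_k)$ is independent of $(\xi^e_k,I_k^s)$, since $\xi^e_k$ depends only on $\gamma_{0:k-1},n_{0:k-1}$ and sensor estimates. This independence claim is the main point to verify carefully: $z_k$ is generated from $\gamma_k$ plus a fresh flip, and $\gamma_k$ is i.i.d. and independent of everything earlier.

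For the branch $\gamma_k=1$, $\xi^e_{k+1}=F_1(\hat x^s_{k|k},n_k)$ depends only on $\hat x^s_{k|k}$, which is measurable with respect to $y_{0:k}\subseteq I_k^s$, and on $n_k$. Since $n_k$ is independent of $(I_k^s,z_k,\gamma_k)$, the conditional probability reduces to $\operatorname{Pr}(\xi^e_{k+1}\in E\mid \hat x^s_{k|k},\gamma_k=1)$. Combining the two branches yields \eqref{eq:belief_recursion}.
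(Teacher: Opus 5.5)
Your proposal is correct and follows the same route as the paper's proof. You drop $y_{k+1}$ by conditional independence, condition on $\gamma_k$, obtain $\eta_k$ by Bayes' rule using independence of $\gamma_k$ from $I_k^s$, and then handle the loss branch through the deterministic map $F_0$ and the reception branch through the independence of $n_k$. The only differences are cosmetic: the paper removes $z_k$ from the branch conditionals right after the total-probability step, whereas you carry $z_k$ into each branch and remove it there using the independence of $(\gamma_k,z_k)$ from $(\xi^e_k,I_k^s)$, which makes explicit the independence of the feedback flips that the paper uses but leaves unstated in Assumption~\ref{main.ass}.
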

\vspace{.2cm}
\begin{proof}
    See Appendix~\ref{appendix.b}.
\end{proof}

The two terms in \eqref{eq:belief_recursion} correspond to the two possible packet reception outcomes, referred to as the loss and reception branches.  Under $\gamma_k=0$, the remote estimator state evolves through $F_0(\xi_k^e)$, and the current belief is propagated through the loss branch. In this case, 
\begin{equation*}
    \operatorname{Pr}(\xi^e_{k+1}\in E\mid \xi_k^e=\xi,\gamma_k=0) = \begin{cases}
        1, &  F_0(\xi)\in E,\\[3pt]
        0, & \text{otherwise}.
    \end{cases}
\end{equation*}
Under $\gamma_k=1$, the remote estimator state is updated through $F_1(\hat{x}^s_{k|k},n_k)$, defining the reception branch. 
Since $n_k\sim \mathcal{N}(0,Q_n)$,  $A_k(\hat{x}^s_{k|k}+n_k)\sim \mathcal{N}(A_k\hat{x}^s_{k|k}, A_kQ_nA_k^\top)$, while $P_{k+1|k}^e=A_k(P_{k|k}^s+Q_n)A_k^\top +Q_w$ is deterministic. 
Moreover, since $z_k$ is noisy, the sensor combines the two branches according to the posterior probabilities $\eta_k(0)$ and $\eta_k(1)$. Hence, \eqref{eq:belief_recursion} provides a recursive characterization of the sensor's uncertainty about the remote estimator state. The resulting belief has a hybrid structure, since the covariance component of $\xi_k^e$ takes values on a finite set determined by the packet reception process, whereas the state-estimate component evolves continuously in $\mathbb{R}^n$.

\subsection{Finite-Mixture Representation of the Sensor Belief}\label{subsec.finite.mixture}
For a fixed time $k$, each realization of the  sequence $(\gamma_0,\gamma_1,\dots,\gamma_{k-1})$ determines a branch of the remote estimator state. Since $\gamma_{\ell} \in \{0,1\}$, there are at most $2^k$ branches at time $k$. Let $J_k\in \{1,\dots,M_k\}$ denote the branch index, where $M_k$ is the number of branches maintained at time $k$. 

For branch $i$, the covariance component takes a fixed value $P^e_{k|k-1}=P_k^{(i)}$. Conditioned on $I_k^s$ and $J_k=i$, the state estimate component $\hat{x}^e_{k|k-1}\sim\mathcal{N}(\mu_k^{(i)},\Sigma_k^{(i)})$, where 
\begin{align*}
    \mu_k^{(i)}&=\mathbb{E}[\hat{x}^e_{k|k-1}\mid I_k^s, J_k=i],\\
    \Sigma_k^{(i)}&=\operatorname{Cov}(\hat{x}^e_{k|k-1}\mid I_k^s,J_k=i).
\end{align*}
Conditioned on $I_k^s$, the sensor estimates $\hat{x}^s_{\ell|\ell}$, $\ell=0,\dots,k$, are known, while $P^s_{\ell|\ell}$ is determined by the Kalman covariance recursion. Hence, for a fixed branch $J_k=i$, the remaining randomness in $\hat{x}^e_{k|k-1}$ is due to the decoding noises $n_\ell\sim \mathcal{N}(0,Q_n)$ associated with successful receptions along that branch.

Let 
\begin{equation}
    w_k^{(i)}:=\operatorname{Pr}(J_k=i\mid I_k^s),\quad i=1,\dots,M_k,
\end{equation}
denote the probability associated with branch $i$, where $w_k^{(i)}\geq 0$ and $\sum_{i=1}^{M_k}w_k^{(i)}=1$. Consequently, the integral appearing in the loss term of \eqref{eq:belief_recursion} can be written as
\begin{align}\label{finite.mixture.repr}
&\int_{\Xi}
\operatorname{Pr}(\xi^e_{k+1}\in E
\mid \xi_k^e=\xi,\gamma_k=0)\pi_k^s(d\xi)
\nonumber\\
&=
\sum_{i=1}^{M_k} w_k^{(i)}
\int_{\mathbb{R}^n}
\operatorname{Pr}\big(
\xi_{k+1}^e\in E
\mid \hat{x}^e_{k|k-1}=\hat{x},
\nonumber\\
&\qquad
P_{k|k-1}^e=P_k^{(i)},\gamma_k=0
\big)
\operatorname{Pr}(d\hat{x}\mid I_k^s,J_k=i).
\end{align}
Hence, the loss term can be decomposed over the finite set of branches. For each branch $i$, the covariance component is fixed at $P_k^{(i)}$, while the state-estimate component is Gaussian with mean $\mu_k^{(i)}$ and covariance $\Sigma_k^{(i)}$, and $w_k^{(i)}$ denotes the corresponding branch probability. Thus, the sensor belief admits a finite Gaussian-mixture representation.
The update of these parameters under the loss and reception branches is described in the next subsection.

\subsection{Belief Update Procedure}
Let the sensor belief at time $k$  be represented by
\begin{equation}
    {\cal B}_k=\Big\{w_k^{(i)}, \mu_k^{(i)}, \Sigma_k^{(i)},P_k^{(i)}\Big\}_{i=1}^{M_k},
\end{equation}
where the parameters of each branch are defined in Section \ref{subsec.finite.mixture}. We next describe how these parameters are updated to obtain ${\cal B}_{k+1}$.

Consider first the loss branch, $\gamma_k=0$. In this case, the remote estimator does not receive a new sensor estimate, and its state evolves from the previous remote state through $F_0$. Hence, each branch $i$ at time $k$ generates a corresponding loss branch at time $k+1$, with
\begin{align}
    P^{(i,0)}_{k+1}&=A_kP_k^{(i)}A_k^\top+Q_w,\label{rec.impl.a}\\
    \mu_{k+1}^{(i,0)}&=A_k\mu_k^{(i)},\\
    \Sigma_{k+1}^{(i,0)} &= A_k\Sigma_k^{(i)}A_k^\top.
\end{align}
The corresponding posterior weight is given by
\begin{equation}
    w_{k+1}^{(i,0)} = \eta_k(0)w_k^{(i)}.
\end{equation}
Consider next the reception branch, $\gamma_k=1$. In this case, the previous remote state is reset using the current sensor estimate and the decoding noise. From $F_1$,
\begin{align}
    P_{k+1}^{(1)}&= A_k(P^s_{k|k}+Q_n)A_k^\top+Q_w,\\
    \mu_{k+1}^{(1)}&=A_k\hat{x}_{k|k}^s,\\
    \Sigma_{k+1}^{(1)}&=A_kQ_nA_k^\top.
\end{align}
Unlike the loss case, these quantities do not depend on the previous branch $i$. Therefore, all branches at time $k$ lead to the same reception branch. Its posterior weight is 
\begin{equation}
    w_{k+1}^{(1)}=\eta_k(1)\sum_{i=1}^{M_k}w_k^{(i)} = n_k(1).\label{rec.impl.b}
\end{equation}

Thus, each of the $M_k$ branches generates one loss branch, while successful reception generates a single common reception branch. Consequently, $M_{k+1}\leq M_k+1.$
Moreover, if two candidate branches have identical parameters $(\mu_{k+1},\Sigma_{k+1},P_{k+1})$, they can be merged by summing their corresponding weights. Therefore, the number of joint belief components grows at most linearly with time, avoiding the exponential growth associated with all possible packet reception histories. This keeps the finite-mixture representation computationally tractable. Equations \eqref{rec.impl.a}--\eqref{rec.impl.b} provide a recursive implementation of the sensor belief update.

\section{Numerical Example}\label{num.example}
Consider a scalar linear dynamical system
\begin{align*}
    x_{k+1}&=0.9x_k+w_k,\\
    y_k&=x_k+v_k,
\end{align*}
where $x_0\sim {\cal N}(0,1)$, $w_k\sim {\cal N}(0,1)$, and $v_k\sim {\cal N}(0,1)$. The packet drop probability and feedback crossover probability are $p=0.3$ and $q=0.3$, respectively. The decoding noise is $n_k\sim {\cal N}(0,1)$. The simulation horizon is $T=50$, and  ${\cal B}_0=\{w_0=1,\mu_0=0,\Sigma_0=0,P_0=1\}$.

Fig. \ref{fig:example1} depicts the evolution of the system for a single realization of the forward and feedback channels. The first subplot shows the packet reception outcome $\gamma_k$ corresponding to the forward wireless channel. The second subplot shows the noisy acknowledgment signal $z_k$ received by the smart sensor through the feedback wireless channel. We observe that the noisy acknowledgment $z_k$ differs in many occasions from the true packet reception outcome $\gamma_k$. 
The third subplot depicts the number of belief branches $M_k$, together with the worst-case linear bound, confirming that the number of branches grows at most linearly with time.  The last subplot compares the true system state $x_k$, the predicted state maintained by the remote estimator $\hat{x}^e_{k|k-1}$, and the sensor belief mean 
\begin{equation*}
    \bar{x}^s_{k}:=\mathbb{E}_{\pi_k^s}[\hat{x}^e_{k|k-1}]=\sum_{i=1}^{M_k} w_k^{(i)}\mu_k^{(i)}.
\end{equation*}
The plot shows that the belief mean closely tracks the remote predicted state despite the unreliable feedback.

Fig. \ref{fig:example2} depicts the results of $1000$ Monte-Carlo simulations averaged over multiple realizations of the noise processes present in the system. The top subplot illustrates the remote prediction mean-squared error (MSE) between the true state  and the estimator predicted state,
\begin{equation*}
    \mbox{Remote MSE}=\mathbb{E}[(x_k-\hat{x}^e_{k|k-1})^2],
\end{equation*}
as a function of the packet drop probability $p\in [0,1]$. As expected, the prediction error increases as the packet drop probability grows, since the remote estimator receives fewer updates from the smart sensor. The bottom plot illustrates the belief tracking MSE,
\begin{equation*}
    \mbox{Belief MSE}=\mathbb{E}[(\hat{x}^e_{k|k-1}-\bar{x}^s_{k})^2],
\end{equation*}
as a function of the crossover probability $q\in [0,1]$. The proposed method is compared with a naive baseline that assumes every received acknowledgment is correct, i.e., $\gamma_k=z_k$.
For our proposed method, we distinguish the following three cases: (a) For $q=0$, we have $z_k=\gamma_k$. This corresponds to a perfect feedback channel, where the smart sensor observes the exact packet reception outcome, and therefore the belief tracking error attains its minimum value; (b) For $q\approx 0.5$, the feedback signal $z_k$ carries almost no information about $\gamma_k$. This corresponds to the case of maximum uncertainty at the sensor, resulting in the largest belief tracking error; (c) For $q=1$, we have $z_k=1-\gamma_k$. In this case, the smart sensor can  infer the packet reception outcome from the feedback signal. This represents a mirror case of $q=0$, and therefore the belief tracking error again attains its minimum value.
\begin{figure}[t]
    \centering
    \includegraphics[width=.9\linewidth]{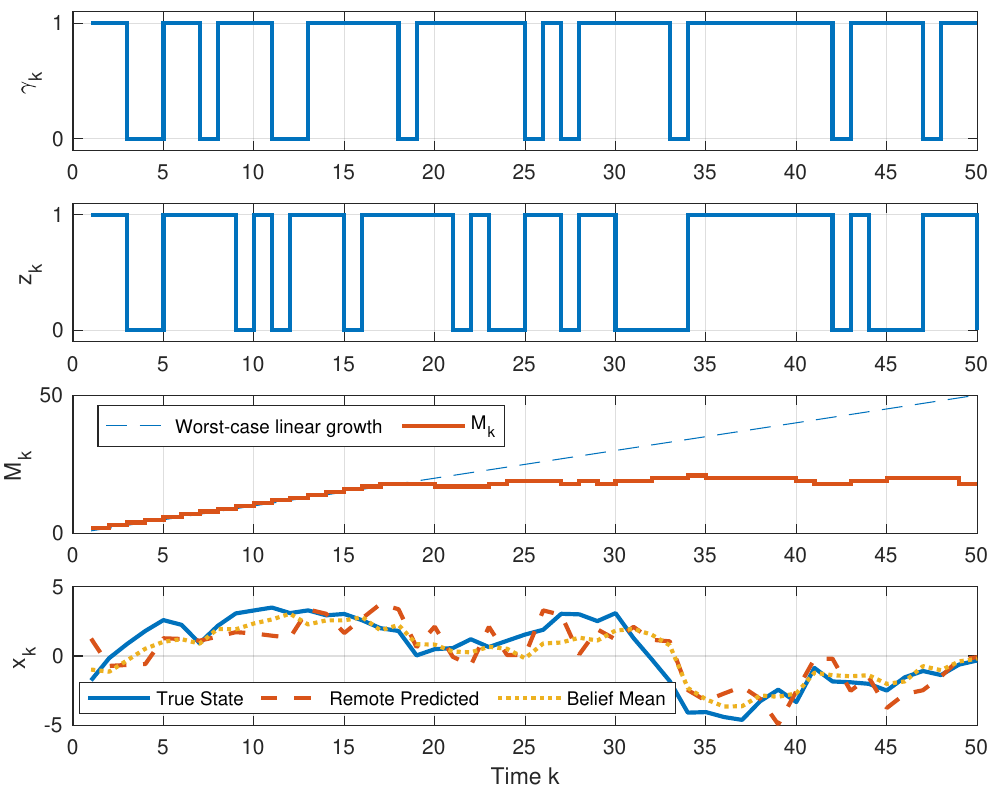}
    \caption{Evolution of the estimation process for a single realization of the forward and feedback wireless channels.}
    \label{fig:example1}
\end{figure}
\begin{figure}[t]
    \centering
    \includegraphics[width=.9\linewidth]{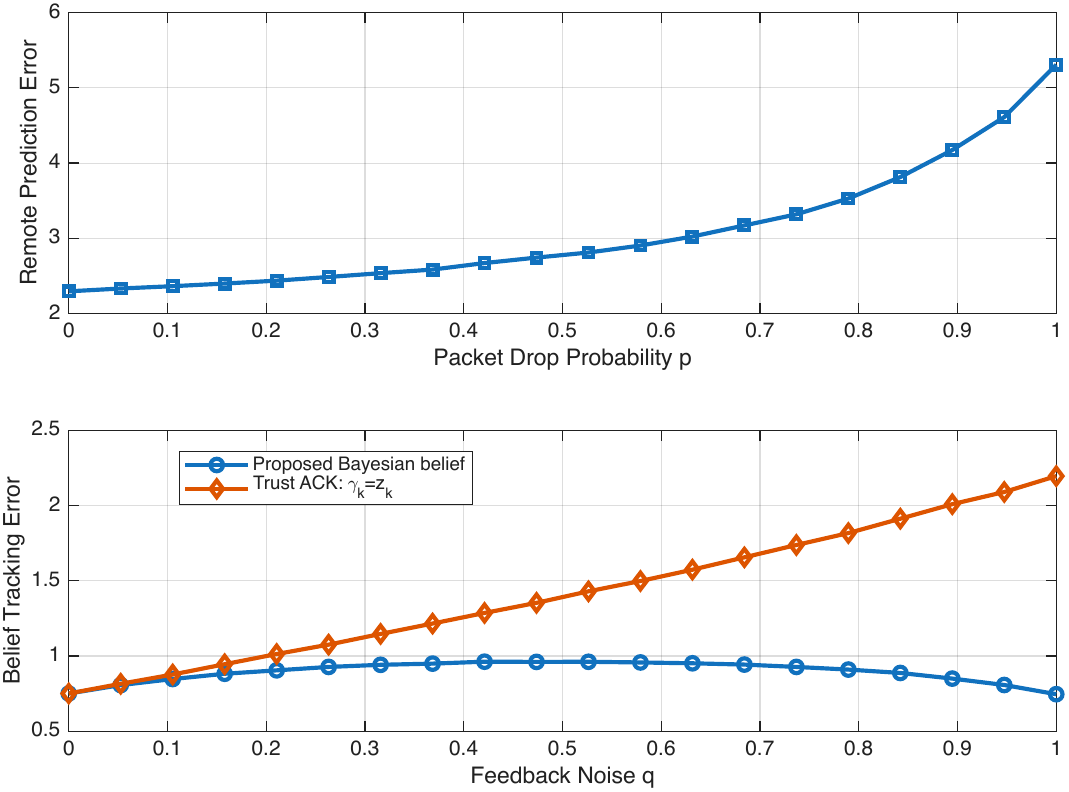}
    \caption{Monte-Carlo evaluation of the estimation performance.}
    \label{fig:example2}
\end{figure}
In contrast, the naive baseline continues to trust the received acknowledgments, and its tracking MSE increases as the crossover probability $q$ increases.   Overall, the above results illustrate the impact of unreliable communication on the estimation performance and highlight the information asymmetry between the smart sensor and the remote estimator induced by the unreliable feedback channel.

\section{Conclusions and Future Directions}\label{sec.conclusion}
This paper studies remote state estimation over wireless networks with unreliable forward and feedback communication. When acknowledgment signals are corrupted by noise, the smart sensor cannot perfectly reconstruct the state maintained by the remote estimator, leading to information asymmetry between the two sides of the network. We characterize the sensor’s uncertainty through a belief over the remote estimator state and derive its recursive evolution under noisy acknowledgment feedback. We further show that this belief admits a finite Gaussian-mixture representation whose number of components grows at most linearly with time, ensuring computational tractability.

Future work will investigate remote estimation problems in which the smart sensor  can dynamically choose when to transmit information. In such settings, the belief developed in this paper can serve as an information state for the design of optimal transmission  policies under unreliable communication.

\appendices

\section{Proof of Proposition~\ref{proposition.belief}}\label{appendix.b}
For any measurable set $E\subset \Xi$,
\begin{align*}
    \pi_{k+1}^s(E)&:=\operatorname{Pr}(\xi_{k+1}^e\in E\mid I_{k+1}^s)\\
    &=\operatorname{Pr}(\xi_{k+1}^e\in E\mid I_{k}^s, y_{k+1},z_k)\\
    &=\operatorname{Pr}(\xi_{k+1}^e\in E\mid I_{k}^s,z_k),
\end{align*}
where the last equality follows since $\xi_{k+1}^e$  and $y_{k+1}$ are conditionally independent given $I_k^s$ and $z_k$. Conditioning on the packet reception outcome $\gamma_k$ gives 
\begin{align}\label{appendix.eq1}
    &\pi_{k+1}^s(E)\\
    &= \sum_{\gamma\in \{0,1\}} \operatorname{Pr}(\xi_{k+1}^e\in E\mid I_k^s,z_k,\gamma_k=\gamma)\operatorname{Pr}(\gamma_k=\gamma\mid I_k^s,z_k)\nonumber\\
    &= \sum_{\gamma\in \{0,1\}} \operatorname{Pr}(\xi_{k+1}^e\in E\mid I_k^s,\gamma_k=\gamma)\operatorname{Pr}(\gamma_k=\gamma\mid I_k^s,z_k),\nonumber
\end{align}
where the second equality follows from the conditional independence of $z_k$ and $\xi^e_{k+1}$ given $I_k^s$ and $\gamma_k$. By Bayes' rule, the second term in \eqref{appendix.eq1} becomes 
\begin{align*}
    &\operatorname{Pr}(\gamma_k=\gamma\mid I_k^s,z_k) = \frac{\operatorname{Pr}(z_k\mid \gamma_k=\gamma)\operatorname{Pr}(\gamma_k=\gamma\mid I_k^s)}{\operatorname{Pr}(z_k\mid I_k^s)}\\
    & =\frac{\operatorname{Pr}(z_k\mid \gamma_k=\gamma)\operatorname{Pr}(\gamma_k=\gamma)}{\sum_{\bar{\gamma}\in \{0,1\}}\operatorname{Pr}(z_k\mid \gamma_k=\bar{\gamma})\operatorname{Pr}(\gamma_k=\bar{\gamma})},
\end{align*}
where we have used the fact that $\gamma_k$ is independent of $I_k^s$. Hence,
\begin{equation}\label{appendix.eq2}
    \pi^s_{k+1}(E)=\sum_{\gamma\in \{0,1\}} \eta_k(\gamma)\operatorname{Pr}(\xi_{k+1}^e\in E\mid I_k^s,\gamma_k=\gamma),
\end{equation}
where $\eta_k(\gamma)$ is defined as in \eqref{eta.eq}.

Next, we consider the two packet reception outcomes separately. For $\gamma_k=0$,
\begin{align*}
    &\operatorname{Pr}(\xi^e_{k+1}\in E\mid I_k^s,\gamma_k=0)\\
    &= \int_{\Xi} \operatorname{Pr}(\xi^e_{k+1}\in E\mid \xi_k^e=\xi,I_k^s,\gamma_k=0) \\
    &\qquad \times\operatorname{Pr}(\xi_k^e\in d\xi\mid I_k^s,\gamma_k=0)\\
    & = \int_{\Xi} \operatorname{Pr}(\xi^e_{k+1}\in E\mid \xi_k^e=\xi,\gamma_k=0)\pi_k^s(d\xi),
\end{align*}
where the second equality follows from the deterministic update $F_0(\xi_k^e)$, the conditional independence of $\gamma_k$ from $\xi_k^e$ given $I_k^s$, and the definition of $\pi_k^s$.
For $\gamma_k=1$, 
\begin{align*}
    &\operatorname{Pr}(\xi^e_{k+1}\in E\mid I_k^s,\gamma_k=1)\\
    &= \operatorname{Pr}(F_1(\hat{x}^s_{k|k},n_k)\in E\mid I_k^s,\gamma_k=1)\\
    & = \operatorname{Pr}( (A_k(\hat{x}^s_{k|k}+n_k),A_k(P^s_{k|k}+Q_n)A_k^\top+Q_w))\in E\\
    &\hspace{55mm}
{} \mid \hat{x}^s_{k|k},\gamma_k=1),
\end{align*}
where the second equality follows since $\hat{x}^s_{k|k}$ is $I_k^s$-measurable, $P^s_{k|k}$ is deterministic, and $n_k$ is independent of $I_k^s$ and $\gamma_k$. 
Substituting the above expressions into \eqref{appendix.eq2} gives the desired belief recursion \eqref{eq:belief_recursion}.

\bibliographystyle{IEEEtran}
\bibliography{autosam}

\end{document}